\newcommand*\samethanks[1][\value{footnote}]{\footnotemark[#1]}
 \newtheorem{thm}{Theorem}
 \newtheorem{alg}[thm]{Algorithm} 
\newtheorem{defn}[thm]{Definition}
\newtheorem{rem}{Remark}
\renewcommand{\d}{\mathrm{d}}
\newcommand {\E} { {\mathbb E} }
\renewcommand{\H}{\mathrm{H}}
\newcommand{\Var}{\mathrm{Var}}
\newcommand{\dps}{\displaystyle}
\newcommand{\cmsq}{cm\textsuperscript{2}}
\newcommand{\sinv}{min\textsuperscript{-1}}
\newcommand{\VEGF}{\mathrm{VEGF}}
\newcommand{\q}{\mathrm{p53}}
\title{Variance-reduced simulation of stochastic
agent-based models for tumor growth}
\author{Annelies Lejon\thanks{Department of Computer Science, KU Leuven, Celestijnenlaan
200A, 3001 Leuven, Belgium (firstname.lastname@cs.kuleuven.be). The first
author's work was supported by the Agency for Innovation by Science and
Technology in Flanders (IWT)}\and Bert Mortier\samethanks{} \and Giovanni Samaey\samethanks{}}
\begin{document}
\maketitle

\begin{abstract}
We investigate a hybrid PDE/Monte Carlo technique for the variance reduced simulation of an agent-based multiscale model for tumor growth. The variance reduction is achieved by combining a  simulation of the stochastic agent-based model on the microscopic scale with a deterministic solution of a simplified (coarse) partial differential equation (PDE) on the macroscopic scale as a control variable. We show that this technique is able to significantly reduce the variance with only the (limited) additional computational cost associated with the deterministic solution of the coarse PDE. We illustrate the performance with numerical experiments in different regimes, both in the avascular and vascular stage of tumor growth.
\end{abstract}

\section{Introduction}
Tumor growth is a complex biological
phenomenon consisting of processes on different scales. On the cellular level -- which will be referred to as the microscopic scale in this paper -- one has to track the random motion of cells, as well as the cell division and cell death. The latter are governed by numerous intracellular processes. Furthermore, the
cellular behavior is strongly coupled to the environment and vice versa.
For example, cell proliferation is determined by the local oxygen concentration and the local cell density while hypoxia on the other hand can
trigger apoptosis, but cells also consume oxygen. This two-way feedback creates
a very specific dynamics characterizing the development of the tumor. A hypoxic
zone develops in the middle of the tumor, which in turn triggers
endothelial cells to vascularize the tumor. This process, also known as
angiogenesis~\cite{carmeliet2005}, ensures that the tumor's need for oxygen and other nutrients is satisfied, which implies that
the tumor can grow further.\\ Smaller avascular tumors can be easily simulated on the microscopic scale using agent-based models. We can distinguish two classes of models. On one hand, cellular automata update grid cells based on a number of
phenomenological rules~\cite{Kavousanakis2012,Olsen2013}, while on the
other hand lattice-free models typically consist of a set of ordinary differential equations (ODEs) attached to each cell.\\
On long time scales, we are typically interested in the tumor as a whole, which we call the macroscopic scale. Agent-based models are typically not well suited to use on this larger scales, since the individual based character implies a large computational cost for a large number of particles, corresponding to larger tumors. One may choose to model the system directly on this
scale using continuum models, based on mass balance
equations~\cite{Berger2011,roose2007, Wise2008,stein2007,Spill2014,Hubbard2013}.
While this approach is significantly cheaper and easier to analyze than agent-based models, it cannot capture discrete features as branching of a vascular network or events regulated by intracellular concentrations. This insight gave rise to multiscale models where agent-based models are typically used to model the cellular component, while the environment is mostly described by a set of reaction-diffusion partial differential equations (PDEs),
corresponding to the macroscopic scale. Examples can be found in
~\cite{Owen2011,D'Antonio2012,Hanahan2011,Anderson2006,Owen2009}.
For a review about the current state of the art in multiscale-modeling of
tumor growth, we refer to~\cite{Deisboeck2011}.\\
Due to the random motion and the influence on the environment, the simulations are subject to noise. When simulating with a standard Monte Carlo algorithm, the variance can only be reduced by increasing the number of particles at the cost of computational efficiently. Various techniques for variance reduction such as antithetic variables, control variates and importance sampling are described in literature, see e.g.~\cite{caflisch1998,Lemieux2009} for an overview. Recently, several hybrid PDE/Monte Carlo algorithms have been proposed in the literature to achieve variance reduction by coupling a PDE-based discretization to a Monte Carlo simulation~\cite{dimarco2007, Rousset2013, radtke2009}.\\
The contributions of this paper are two-fold:
\begin{itemize}
\item We develop a multiscale model where the random motion is modeled
using stochastic differential equations (SDEs), the intracellular variables for
the cell cycle and apoptosis are described by ODEs and the environment,
consisting of diffusible components, is modeled by PDEs. The model is a modified
version of the cellular automaton model of Owen et al.~\cite{Owen2011}. The main
differences are that the new model is lattice-free and the fact that our model does not contain any explicit delay terms.
\item We propose a novel technique to reduce the variance on the results of the
adapted multiscale agent-based model. Based on the ideas in~\cite{Rousset2013, dimarco2007}, we develop a hybrid PDE/Monte Carlo method using a coarse stochastic process (called the control process)
and a corresponding PDE. In this specific case, the control process modeling the spatial behavior of the individual cells contains all details of the microscopic model except for cell births, cell
deaths and VEGF secretion. The keypoint is to obtain this missing information with reduced variance by an appropriate coupling between the full microscopic agent-based model and the control process.
\end{itemize}
We first give a detailed overview of the different layers of the model. Next, we describe the variance reduction
algorithm in the section~\ref{sec:varred}. We illustrate the technique
numerically in section~\ref{sec:results}. Finally, in section~\ref{sec:disc} we elaborate on a few possibilities for future research.

\section{Models\label{sec:models}}
In this section, we describe a multiscale model for tumor growth. 
The microscopic model for tumor growth is based on the ideas used to describe
bacterial chemotaxis~\cite{Erban2004,Rousset2013}, the multiscale cellular
automaton model was developed by Owen and coworkers~\cite{Owen2011} and our goal is
to perform variance reduction in order to estimate the resulting population
densities in a more accurate way. As in~\cite{Rousset2013}, the
model proposed in this paper is time and space continuous. Apart from the fact that the model is lattice-free, making the computational cost quasi independent of the size of the domain and hence, we can easily
rescale the system to simulate larger tumors (compared to the examples given in~\cite{Owen2011}).\\
 We distinguish two main components:
the environment, modeled by a couple of reaction diffusion equations and the
agent-based model describing the individual cellular motion and internal variables (e.g. cell cycle,
apoptosis state and internal concentrations such as VEGF and p53) attached to each cell.\\
We consider three types of cells, indexed by $1\le p \le P=3$: normal cells
($p=1$), cancer cells ($p=2$), and endothelial cells (that build up blood
vessels, $p=3$).
For each of these cell types, we consider an ensemble of $I_p(t)$ cells,
and consider three state variables: position $x\in \mathbb{R}^2$, cell cycle
phase $\phi\in[0,1]$. The intracellular concentrations $[\VEGF]_\mathrm{int}$, $[p53]$ and apoptosis variable $z$ are scalars $\in \mathbb{R}$. 
These cells evolve according to evolution laws that depend on the concentration
$[O_2](x,t)$ of oxygen and $[\VEGF](x,t)$ of the Vascular Endothelial Growth
Factor (which we call the environment).\\
Remark that the reaction-diffusion partial differential equations (PDEs)
describing the diffusible components of the environment still need to be solved
on a grid, but this cost is marginal due to the sparsity of the involved linear
systems, which ensures that the cost dependent on the domain size is limited.\\
We now give an overview of the notations that will be used throughout the
paper, after which  we describe the evolution laws for the environment, and
detail the evolution laws for each of the cell types. The cell type dependency is
mainly caused by cell type dependent coefficients, which will be discussed
later on in the section describing the agent-based model in more detail.
\paragraph{Notation}
\begin{itemize}
  \item The state variables attached to a single cell of type $p$ at time $t$ 
  are position $X_{p}(t)$, cell cycle phase $\Phi_{p}(t)$, generation
  $\zeta_{p}(t)$, internal concentrations $[\q]_p(t)$, $[\VEGF_\mathrm{int}]_p(t)$ and apoptosis variable $Z_{p}(t)$.
  \item Particle number densities are denoted by $n_p(x,t)$ indexed by a
  suitable subscript to indicate the nature of the density. Further,
  $n_v(x,t)$ is used to describe the vascular density.
  \item To keep a consistent notation throughout the paper, we introduce the following convention. If, at a moment $t=t^\star$,
  the cell with index $i^\star$ in population $p$ divides, we set 
  \begin{equation}
  	I_p(t^\star) = I_p(t^\star_{-})+1\label{eq:Ipbirth}
  \end{equation}
  in which the symbol $t^\star_{-}$ is used to emphasize that the involved number of cells is meant to be taken just before the division.
  Simultaneously, we introduce a new cell as specified in the paragraph concerning cell division (see page \pageref{par:celldiv}). When a cell undergoes apoptosis, it is removed from the simulation.
  To avoid cumbersome renumbering of the cells in the text, we associate a weight $w_{i,p}(t)$ to each of the cells. If the cell is alive, the corresponding weight is one;
  upon apoptosis, it becomes zero. The active number of cells is therefore:
  \begin{equation}
  	\bar{I}_p(t) = \sum_{i=1}^{I_p(t)}w_{i,p}(t)
  \end{equation}
  \item The evolution of the state variables is influenced in various ways by
  the (local) environment. The latter will be modelled by means of diffusible
  components $[\VEGF](x,t)$ describing the VEGF concentration, while
  $[O_2](x,t)$ denotes the oxygen concentration.
\end{itemize}

\subsection {Agent-based model\label{subsec:abm}}
In this section we give a detailed overview of the evolution of the different
state variables attached to each cell of the different cellular populations.
 \paragraph{Position.}The random motion of the
position of the cells is described as a biased Brownian motion. Cells of type
$p$ move randomly with diffusion coefficient $D_p$, and the cells are
chemotactically attracted towards high concentrations with sensitivity $\chi_p$.
This sensitivity is especially important for the endothelial cells, responsible for blood vessel growth,
\begin{equation}
  dX_{p}(t) = \chi_p\nabla
  [VEGF](X_p(t),t))\left(1-\dfrac{n_p(X_p(t),t)}{n_{\mathrm{max},p}}\right)\d
  t+\sqrt{2D_p}\d W_t 
\label{eq:pos_disc}
\end{equation}
 The cell number density $n_p$, can be computed as:
\begin{equation}
    n_p(x,t) = \sum_{i=1}^{I_{p}(t)}w_{i,p}(t)\delta_{X_{i,p}(t)}
    \label{eq:number_density}
\end{equation}
where $I_{p}(t)$ denotes the total number of cells of type $p$ at time $t$ and $\delta$ denotes the classical Dirac kernel, resulting in a standard histogram.
Remark, in contrast to the cellular automaton model described in~\cite{Owen2011},
the resulting equation for the position is a stochastic differential equation
(SDE) instead of the discrete space jumps used in~\cite{Owen2011}. Finally, we
have to stress the fact that the above equation is general for all the cell
types. To be more concrete, normal cells don't move at all, while cancer cells
are characterized by pure diffusive motion and endothelial cells demonstrate
diffusive behavior but they also respond to chemotactic cues.
 \paragraph{Cell
division\label{par:celldiv}} is modeled by means of the following ODE:
\begin{equation}
  \dfrac{\d \Phi_{p}(t)}{\d t} =
  \dfrac{[O_2](X_p(t),t)}{\tau_{\mathrm{min},p}(C_{\phi,p}+[O_2](X_p(t),t))}\H
   \left(\zeta_{p}(t)-\zeta_{p,\max}\right)
\end{equation}
  
where $\tau_{\min,p}$ denotes the minimal time needed for a cell to
complete one cell cycle and $\zeta$ indicates the generation of a cell. Remark
that $\tau_{\mathrm{min},p}$ depends on the cell type. To be more specific, cancer cells are able to proceed twice as
fast as normal cells during the cell cycle in a given environment (see
table~\ref{tab:populations}). Naturally the cell cycle speed depends on the
local oxygen concentration $[O_2](X_p(t),t)$ as observed by the cell while
evolving through the cycle.
The higher the oxygen concentration, the faster the cycle is completed, while
the cell cycle is put on hold when the cell suffers from hypoxia. A more
detailed biological motivation for this model can be found in
\cite{Owen2011,Tyson2001} and its supplementary material. Remark that in the cellular automaton model by Owen et
al (see~\cite{Owen2011}) all cells can divide an unlimited number of
times, which corresponds to the hypothesis that all cells are stem cells, which
is obviously not a realistic assumption. 
Thus, we have extended the model to account for the fact that cells are only
able to divide a finite number of times (i.e. $\zeta_{p,\max}$). To be more specific we added a factor $\H(\zeta_p(t)-\zeta_{p,\max})$ to check for the generation of the
corresponding cells. Here, we assume that normal cells can divide only $4$
times, which is consistent with~\cite{Fletcher2012}.
On the other hand we assume that all the cancer cells are cancer stem cells,
which is still a simplification.\\
If, for the cell with index $i^\star$ in population $p$ at time $t=t^\star$, we obtain $\Phi(t^\star)\geq 1$, we introduce a new cell in the simulation.
We adjust $I_p(t)$ according to equation~\eqref{eq:Ipbirth} and set $\Phi_{i^\star,p}(t)=0$. and the generation of the parent cell increases by one. The new cell inherits the state from the cell that divides except for the generation $\zeta$:
\begin{equation}
	\begin{aligned}
	X_{I_p(t),p}(t^\star) &= X_{i^\star,p}(t^\star) \\
	[\q]_{I_p(t),p}(t^\star)  & = [\q]_{i^\star,p}(t^\star) \\
	\Phi_{I_p(t),p}(t^\star) &= \Phi_{i^\star,p}(t^\star)\\
\end{aligned} 
\qquad
\begin{aligned}
 Z_{I_p(t),p}(t^\star)  &= Z_{i^\star,p}(t^\star)\\
[\VEGF_\mathrm{int}]_{I_p(t),p}(t^\star)&= [\VEGF_\mathrm{int}]_{i^\star,p}(t^\star) \\
 \zeta_{I_p(t),p}(t^\star) &= 0\\
\end{aligned}
\end{equation}

\paragraph{Intracellular model.} We introduce a intracellular module consistent
with \cite{Owen2011} in order to describe some important intracellular
concentrations, namely the p53 concentration $[p53]$  and the intracellular VEGF concentration $[\VEGF_\mathrm{int}]$. The former
can be seen as an estimator for the number of mutations that a cell has
undergone during its lifetime. We have:
\begin{equation}
\begin{aligned}
    \dps\dfrac{\d [\q]_p(t)}{\d t} & = c_1 -c_2
    \dfrac{[O_2](X_p(t),t)}{C_{p53}+[O_2](X_p(t),t)}[\q]_p(t)\\[8pt] 
    \dps \dfrac{\d [\mathrm{VEGF}_\mathrm{int}]_p(t)}{\d t} &= c_3
    -c_4\dfrac{[\q]_p(t)[\VEGF_\mathrm{int}]_p(t)}{J_5+[\VEGF_\mathrm{int}]_p(t)}\\
    \quad&+c_5\dfrac{[O_2](X_p(t),t)}{C_\mathrm{VEGF}+[O_2](X_p(t),t)}[\VEGF_\mathrm{int}]_p(t)
\end{aligned}
\end{equation}
Cells are storing VEGF intracellular (i.e. [VEGF]\textsubscript{int}) during hypoxic
conditions and release it once this intracellular concentration has reached a certain
threshold level $[\VEGF_\mathrm{int}]_\mathrm{thr}$.
Further, $c_1,\ldots c_5$ and $C_{\mathrm{p53}},C_\mathrm{VEGF}$ are constants that can be found
in table~\ref{tab:populations}. 
Next we describe the model for apoptosis, depending on the cell type. Therefore we formally define $\gamma_{\mathrm{apt},p}(z,n_p)=F_p(z,n_p)$ as the apoptosis rate, which is further specified in the following paragraphs.
\paragraph{Apoptosis for normal cells.} For normal cells, cell death is
completely determined by the subcellular p53-concentration. So, we set the
apoptosis variable $z:=[p53]$. The apoptosis threshold $\gamma_\mathrm{apt}$ can
then then be written as:
\begin{equation}
    \gamma_{\mathrm{apt},1}(z,n_1)= \H\left(z-
    z_\mathrm{high}\H(n_\mathrm{thr}-n_\mathrm{1}) -
    z_\mathrm{low}\H(n_\mathrm{1}-n_\mathrm{thr})\right)\label{eq:death_normal}
\end{equation}
where $\mathrm{H}$ indicates the Heaviside function.
 This definition of $\gamma_\mathrm{apt}$
implies that normal cells undergo apoptosis if $\gamma_\mathrm{apt}(z,n_1)=1$
corresponding to the situation that $z$ has reached a certain threshold value
depending on the harshness of the environment. The threshold value is lower in case of a harsh environment, defined as $n_1<n_\mathrm{thr}$, where $n_\mathrm{thr}$ denotes a threshold
value for the normal cells.
\paragraph{Apoptosis for cancer cells.} In contrast to normal cells, the
apoptosis mechanism for tumor cells is independent of the p53-concentration since this
mechanism to regulate the normal cell cycle does not function properly anymore in a tumor. Cancer cells are able to go into a quiescent state when expressed to hypoxic circumstances, meaning that they don't consume any nutrients anymore for a 
while. However the duration of this quiescent state is limited, which implies
that cancer cells will also undergo apoptosis when the hypoxia holds too long.
On the other hand, cancer cells have the ability to recover quickly once there is
again more oxygen available. This mechanism can be modeled by the following
equation:

\[
    \dfrac{\d Z(t)}{\d t} =
    \underbrace{A\H([O_2]_\mathrm{thr}-[O_2](X_{p}(t),t))}_{\text{Linear
    increase during hypoxia}}
    -\underbrace{BZ(t)\H([O_2](X_{p}(t),t)-[O_2]_\mathrm{thr})}_{\text{Exponential
    decay if }[O_2](X_p(t),t)>[O_2]_\mathrm{thr}}
\]
where $A,B$ are constants. Further, the first term models the hypoxic state,
i.e. the local oxygen concentration $[O_2](X_p(t),t)$ drops below the
threshold level $[O_2]_\mathrm{thr}$. During this hypoxic period, the internal
variable $z$ increases steadily.
On the other hand, the second term describes the recovery of the cancer cells if
the environment is not hypoxic anymore, which is captured by the exponential
decay term of $Z(t)$. Cancer cells die if $Z(t)\geq 1$, corresponding to
$\gamma_{\mathrm{apt},2}(z)=\H(z-1)$.
\paragraph{Endothelial Cells.} Remark that the model equations concerning cell
division and cell death will not be used for endothelial cells. Consistent with
existing literature, the so-called \textit{snail-trail} approach is used to
model sprouting angiogenesis~\cite{Bentley2008,Hellstrom2007}.\\
\paragraph{Full agent-based model.} This results in the following set of equations for the full agent-based model:
\begin{equation}
	\begin{cases}
		  dX_{p}(t) = \chi_p\nabla
  [VEGF](X_p(t),t))\left(1-\dfrac{n_p(X_p(t),t)}{n_{\mathrm{max},p}}\right)\d
  t+\sqrt{2D_p}\d W_t \\[12pt]
    \dfrac{\d \Phi_{p}(t)}{\d t} =
  \dfrac{[O_2](X_p(t),t)}{\tau_{\mathrm{min},p}(C_{\phi,p}+[O_2](X_p(t),t))}\H
   \left(\zeta_{p}(t)-\zeta_{p,\max}\right)\\[14pt]
    \dps\dfrac{\d [\q]_p(t)}{\d t} = c_1 -c_2 \dfrac{[O_2](X_p(t),t)}{C_{p53}+[O_2](X_p(t),t)}[\q]_p(t)\\[14pt]
    \dps \dfrac{\d [\mathrm{VEGF}_\mathrm{int}]_p(t)}{\d t} = c_3
    -c_4\dfrac{[\q]_p(t)[\VEGF_\mathrm{int}]_p(t)}{J_5+[\VEGF_\mathrm{int}]_p(t)}+c_5
    \dfrac{[O_2](X_p(t),t)}{C_\mathrm{VEGF}+[O_2](X_p(t),t)}[\VEGF_\mathrm{int}]_p(t)\\[14pt]
\gamma_{\mathrm{apt},p}(z,n_p,t) = F_p(z,n_p,t)
	\end{cases}\label{eq:abm_model}
\end{equation}
The corresponding parameter values can be found in table~\ref{tab:populations}.
\begin{table}[!h]
\centering
\begin{tabular}{|c|c|c|c|c|}
\hline
Parameter & $n_1$ & $n_2$ & $n_3$ & units
\\ \hline
$\chi_p$ & $0.0$ & $0.0$ & $\num{2e-4}$ &
\cmsq/min/nM\\
$\zeta_{p,\max}$ & $4$ & $\infty$ & $4$ & times \\
 $C_{\phi,p}$ &  $3$& $1.4$ & & $\mathrm{mmHg}$\\
 $C_{\VEGF}$ & 	$0.01$& $0.01$ & $0.01$& mmHg\\
 $C_{\q}$ & $0.01$& $0.01$ & $0.01$& mmHg\\
 $\tau_{p,\min}$ & $\num{3000}$ & $\num{1600}$ & & min\\
 $z_\mathrm{high}$ & $0.8$ & & & dimensionless\\
 $z_\mathrm{low}$ & $0.08$ & & & dimensionless\\
 $n_\mathrm{thr}$ & $0.75$ & &  & dimensionless\\
$[O_2]_\mathrm{thr}$ & & $8.9$ & &  mmHg\\ 
$[\VEGF_\mathrm{int}]_\mathrm{thr}$ & $0.27$ & $0.27$& & nM\\
 $c_1$ & $\num{ 2e-3}$ & $\num{ 2e-3}$& & \sinv\\
 $c_2$ & $\num{1e-2}$ &  $\num{1e-2}$& & \sinv\\
$c_3$ & $\num{2e-3}$ & $\num{ 2e-3}$& & \sinv\\
$c_4$ & $\num{2e-3}$ & $\num{ 2e-3}$& & \sinv\\
$c_5$ & $\num{ 1e-2}$ & $\num{ 1e-2}$& & \sinv\\
$J_5$ & $0.04$ & $0.04$ & & nM\\
$A$& & $1$ & & \sinv\\
$B$& & $\num{2.5e-3}$ & &\sinv\\\hline
\end{tabular}
\caption{Parameter values related to the populations.\label{tab:populations}}
\end{table}

\subsection{Coarse Description\label{subsec:macro}}
An alternative approach to model tumor growth is to describe the evolution of the
populations as a whole in a probabilistic way using partial differential
equations (PDEs). In general, this approach yields a reaction-diffusion PDE.
However, in this case it is not possible to derive a closed formulation for
the reaction terms since those all depend on intracellular variables. A
continuum description for the model outlined above without births and deaths can
be found in~\cite{Hillen2009}. The resulting macroscopic equation -- achieved by taking the limit for a high number of particles -- for the evolution of
the populations reads:
\begin{equation}
    \partial_t n_p(x,t) =D_p\nabla^2
    n_p(x,t)-\chi_p\nabla.\left[n_p(x,t)\left(1-\dfrac{n_p(x,t))}{n_{p,\max}}\right)\nabla
    [VEGF](x,t)\right]\label{eq:macro}
\end{equation}
where no reactions (cell divisions, cell deaths) are taken into account.
Next, we introduce the following macroscopic time-stepper:
\begin{equation}
\begin{aligned}
n_p(x,t^{k+1}) &=  n_p(x,t^k)+\delta t D_p\nabla^2
n_p(x,t^k)\\
&\quad-\delta t\left(\chi_p\nabla.\left[n_p(x,t^k)\left(1-\dfrac{n_p(x,t^k))}{n_{p,\max}}\right)\nabla
[VEGF](x,t^k)\right]\right)
\end{aligned}
\end{equation}
 which uses a first order
Euler discretization to discretize the time derivative and a second order
central finite volume scheme to discretize the spatial derivatives. Further
details can be found in section~\ref{sec:results}.
\subsection{Angiogenesis\label{subsec:Angio}}
The growth of new blood vessels, also known as \textit{angiogenesis} is
essential for the development of a tumor. Hanahan and co-authors identified it as one of the hallmarks of cancer (see \cite{Hanahan2011, Hanahan2000}).
In the early stages of cancer, the existing vasculature is able to provide enough oxygen and
other nutrients. But as soon as the size of the tumor has reached a certain threshold, a hypoxic zone develops in the middle of the tumor.
To cope with this phenomenon, the tumor secretes VEGF, a growth factor, which triggers endothelial cells to move
chemotactically towards this hypoxic zone and grow new blood vessels.  In this paper we will use an existing model for
angiogenesis, described in \cite{Owen2011}. We distinguish two phenotypes:
endothelial cells can either be motile leader cells (also called \textit{tip}
cells) or static \textit{stalk} cells. We model proliferation of endothelial
cells by means of the so-called \textit{snail-trail} approach, where each tip
cell produces a new (static) endothelial cell at its previous position, creating a trail of static \textit{stalk} cells behind him. Apart from
this feature, new tip cells --known as \textit{sprouts}-- can emerge from active
vessels with sprouting probability $P$\textsubscript{sprout} along the active
vessels. (see~\cite{Owen2011}):
\begin{equation}
	P_\mathrm{sprout} = \delta t \dfrac{P_{\max}[VEGF](x,t)}{V_\mathrm{sprout}+[VEGF](x,t)}
\end{equation}
where $P_{\max}=\num{3e-4}$~\sinv~indicates the maximal endothelial sprouting
rate (~see~\cite{Owen2011}) and $V$\textsubscript{sprout}$=0.5$~nM denotes the VEGF concentration at which the sprouting probability is half maximal. Remark that the probability of the emergence of two
sprouts close to each other within the same timestep is zero.  A biological explanation  for this fact can be found in the supplementary material provided with \cite{Owen2011} and~\cite{Gerhardt2008,Bentley2008,Jakobsson2010}, where
the authors pointed out that delta-notch signaling inhibits the formation of new sprouts in neighbouring
endothelial cells. Additionally, the vessel radii are also adapted dynamically based on the work of~\cite{Owen2011} where pruning
of the vessels was also incorporated in the model: if the pressure in a branch is too low, the corresponding will collapse.
\subsection {Environment\label{subsec:env}} 
The cellular environment consists of two diffusible
components regulating the behavior of the cells in various ways. Oxygen is evidently important for the cells to proceed through the cell cycle.
The local oxygen concentration is determined from the following equation:
\begin{equation}
\begin{aligned}
 \partial_t [O_2](x,t)&=\underbrace{D_{[O_2]}\nabla^2 [O_2](x,t)}_{\text{diffusion}} +\underbrace{\psi_{[O_2]}  n_v(x,t)([O_2]_\mathrm{blood}-[O_2](x,t))}_{\text{exchange with blood}}\\
&\quad+  \underbrace{[O_2](x,t]k_{[O_2]} \sum_{p=1}^Pn_p(x,t)}_{\text{Consumption}}
\end{aligned}
 \label{eq:reactdiff_ox}
\end{equation}
where $D_{[O_2]}$ is the diffusion coefficient of oxygen, $\psi_{[O_2]}$ denotes the
permeability of the oxygen through the vessels. $n_v(x,t)$ describes the surface area
occupied by the vessel at position $x$. $[O_2]_\mathrm{blood}(x,t)=[O_2]_\mathrm{ref}H(x,t)/H_\mathrm{in}$ defines the
oxygen concentration in a blood vessel located at position $x$. $[O_2]_\mathrm{ref}$ is a reference oxygen concentration, $H(x,t)$ is the haematocrit at location $x$ and time $t$ and $H_\mathrm{in}$ is the haematocrit at an inflow node and by default it is set to $H_\mathrm{in}=0.45$. The last term in~\eqref{eq:reactdiff_ox} reflects the fact that all cells
 consume oxygen at a
rate $k_{[O_2]}$.\\
A similar approach is used to describe the local concentration of growth factors
(e.g.,~\textit{Vascular Endothelial Growth Factors}) denoted by [VEGF]. The
latter is responsible for the growth of new blood vessels, which is especially
important for larger tumors. Initially, the tumor can benefit from the existing
vasculature, but when the tumor occupies a larger volume the oxygen supply
doesn't satisfy anymore and the cells are obliged to use their ability to
ask for new vessels, by secreting VEGF. Endothelial cells -- the building blocks
of blood vessels -- react and move chemotactically towards the hypoxic regions.
The corresponding reaction diffusion equation for VEGF reads:
\begin{equation}
\begin{aligned}
    \partial_t [VEGF](x,t)&=\underbrace{D_{\VEGF}\nabla^2[VEGF](x,t)}_{\text{diffusion}}
    -\underbrace{\psi_\VEGF n_v(x,t)[VEGF](x,t)}_{\text{exchange with
    blood}}+
  \underbrace{S_\VEGF(x,t)}_{\text{production}}\\
    &\quad-\underbrace{\delta_\VEGF \VEGF(x,t)}_{\text{decay}}\\
    S_{[\VEGF]}(x,t)& =k_{[\VEGF]} \sum_{p=1}^P\sum_{i=1}^{I_p(t)}
   \delta_{X_{i,p}(t)}\H([\VEGF_\mathrm{int}]_{i,p}(t) -
   [\VEGF_\mathrm{int}]_\mathrm{thr})
\end{aligned}
\end{equation}

 \begin{table}[h!]
\centering
\begin{tabular}{|c|c|c|c|}
\hline
Parameter & Oxygen  & VEGF& units \\\hline
$D$ & $\num{0.0014}$& $\num{6e-4}$ &\cmsq/min \\
$\psi$ & $6$  & $\num{6e-4}$ & cm/min\\
$\delta$ & $0$ &$ 0.6$& \sinv \\
$k_{[O_2]}$ & $-13$ & &\sinv\\
$k_{[\VEGF}]$&  &$0.6$&\sinv\\
$[O_2]_\mathrm{ref}$& $20$ & & mmHg\\
\hline
\end{tabular}
\caption{Parameter values reaction diffusion equations}
\label{tab:Diff}
\end{table}

\section{Variance reduction\label{sec:varred}}
In this section, we propose a variance reduction algorithm similar to the technique used in~\cite{Rousset2013} to
simulate bacterial chemotaxis. The main differences are due to the fact that (i) the model is not conservative; and (ii)
the internal dynamics only relates to cell division, apoptosis and VEGF secretion and not to advection-diffusive behavior.\\
As in~\cite{Rousset2013}, the algorithm relies on the combination of three simulations: a stochastic simulation with the full microscopic
model, as well as with a coarse approximation, combined with a deterministic grid-based simulation of the coarse model. The full microscopic
model uses an ensemble of $I_p(t)$ particles with state variables:
\begin{equation}
	\{ X_{i,p}(t),Z_{i,p}(t),\Phi_{i,p}(t),\zeta_{i,p}(t), [\q]_{i,p}(t),[\VEGF_\mathrm{int}]_{i,p}(t) \}_{i=1}^{I_p(t)}\label{eq:states_full}
\end{equation}
As the coarse agent-based model, we conceptually consider an agent-based model in which the internal state has been suppressed
and only the position remains:
\begin{equation}
	\{X_{i,p}^c(t)\}_{i=1}^{I_p(t)}\label{eq:states_control}
\end{equation}
So no internal dynamics is present, cells cannot divide, die or secrete VEGF. (In practice, we will use the results obtained from the full microscopic model, in which
we neglect apoptosis and  cell division, see later). The only dynamics is motion, which can be modeled with a PDE for the population density (see equation~\eqref{eq:macro}). We call
this coarse approximation the \textit{control process}. We also introduce the formal semigroup notation:
\begin{equation}
	e^{tL_p^c} \qquad\text{with } L_p^c(n_p^c)= -D_p\nabla^2-\chi_p\nabla\cdot\left[n_p(x,t)\left(1-\dfrac{n_p(x,t)}{n_{p,\max}}\nabla [VEGF](x,t)\right)\right]
\end{equation}
that represents the exact solution of the macroscopic partial differential equation~\eqref{eq:macro}. In practice, the solution will be approximated by a deterministic solution on a grid.
It should be clear that the advection-diffusion behaviour in both agent-based models is identical. Thus, the only difference between the two models occurs when cells divide or die. Assuming no reactions take place, the three processes thus have the same expectation.  This observation leads to the following variance reduction algorithm.
As an initial condition, we start from $I_p(0)$ particles sampled from specific
probability densities, resulting in the number density $n_p(x,0)$. For each particle, we choose a given internal state, for instance $\Phi_{i,p}(0)=Z_{i,p}(0)=0$, $\zeta_{i,p}(0)=0$, $[p53]_{i,p}(0)=0$,$[\VEGF_\mathrm{int}]_{i,p}(0)=0$, $1 \le i \le I_p(0)$, $1 \le p \le P$. (These internal states could also be sampled from an appropriate probability distribution.) Additionally, we introduce the variance reduced measure  $\bar{n}(x,t)$, which we initialize as $\bar{n}_p(x,0)=n_p(x,0)$. We denote the time step $\delta t$ and the discrete time instances $t^\ell=\ell\delta t$, $\ell=0,1,\ldots$
	\begin{alg}[Variance reduction for tumor growth]{\label{alg:varred_tumor}}
	We advance the variance reduced number density $\bar{n}(x,t)$ from time $t^{\ell}$ to $t^{\ell+1}$ as follows:
	\begin{itemize}
		\item Evolve the particle states \eqref{eq:states_full} from $t^\ell$ to $t^{\ell+1}$ using the agent-based model \eqref{eq:abm_model}.
		\item Compute the number density for the stochastic microscopic model using \eqref{eq:number_density}, as well as the number density for the coarse process as 
		\begin{equation}
			n_p^c(x,t^{\ell+1})=\sum_{i=1}^{I_p(t^\ell)}w_{i,p}(t^\ell)\delta_{X_{i,p}(t^{\ell+1})}
		\end{equation}
		i.e., we compute the number density for the control process based on particle positions and velocities at time $t^{\ell+1}$, taking into account only the particles that were present in the simulation at time $t^\ell$.
		\item Evolve the control number density $n^c_p(x,t)$ using a
		grid-based method based on \eqref{eq:macro} and add the reactions (the difference in number density due to cell division and apoptosis)
		\begin{equation}
			\bar{n}_p(x,t^{\ell+1}): = \bar{n}_p(x,t^\ell)\;e^{\delta t
			L^c} + n_p(x,t^{\ell+1})-n_p^c(x,t^{\ell+1}) \label{eq:npbar}
		\end{equation}  
\end{itemize}
\end{alg}

Next, we will prove that the proposed estimator for the population densities $n_p$ are unbiased and that the algorithm indeed reduces the variance on the mean population density. To this end, we define the so-called reaction field:
\begin{defn}[Reaction field] The control process differs from the full microscopic model in the way that there are no births, deaths or VEGF-secretion events. The direct influence on the population density can be summarized by the \textit{Reaction field} defined as:
\begin{equation}
\begin{aligned}
	R_p(x,t^{l+1}) &= n_p(x,t^{l+1}) - n_p^c(x,t^{l+1})\\
&= \sum_{i=1}^{I_p(t^{l+1})}w_{i,p}(t^{l+1})\delta_{X_{i,p}(t^{l+1}) }- \sum_{i=1}^{I_p(t^l)}w_{i,p}(t^{l})\delta_{X_{i,p}(t^{l+1})}\\
&= \sum_{i=1}^{I_p(t^l)}\left(w_{i,p}(t^{l+1})-w_{i,p}(t^{l})\right)\delta_{X_{i,p}(t^{l+1})}
\end{aligned}
\end{equation}
\end{defn}
\begin{defn}[Deterministic control density]
We also introduce a shorthand notation for the control density calculated with the macroscopic evolution equation:
\begin{equation}
	\tilde{n}(x,t^{l+1}) := \bar{n}(x,t)e^{\delta t L^c} \label{eq:nptilde}
\end{equation}
\end{defn}
This procedure will be repeated after reinitializing the control density $\tilde{n}_p(x,t)
=\bar{n}(x,t)$. The importance of reinitialization can be illustrated by looking
into the following hypothetical situation.  Suppose the $i$th cell of type $p$
divides at time $t=t^\star$, and hence cell $I_p+1$ is born. At time $t>
t^\star$, this newborn cell has moved randomly through the domain. Apart from
this random motion, it also has influenced the environment along its track.
Those events cannot be taken into account without reinitialization.
\begin{thm}[Unbiased estimator]
The algorithm described above yields an unbiased estimator for
the population density $n_p$.
\end{thm}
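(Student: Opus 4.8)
The plan is to prove the unbiasedness identity $\E[\bar{n}_p(x,t^\ell)] = \E[n_p(x,t^\ell)]$ for every time index $\ell \ge 0$ and every $x$ by induction on $\ell$. The base case is immediate from the initialization $\bar{n}_p(x,0) = n_p(x,0)$, which gives equality of the densities themselves, hence of their expectations, at $\ell = 0$. For the inductive step I would first rewrite the update rule \eqref{eq:npbar} in terms of the reaction field: since $n_p(x,t^{\ell+1}) - n_p^c(x,t^{\ell+1}) = R_p(x,t^{\ell+1})$ by definition, the algorithm reads
\[
\bar{n}_p(x,t^{\ell+1}) = \bar{n}_p(x,t^\ell)\, e^{\delta t L^c} + R_p(x,t^{\ell+1}).
\]

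Next I would take expectations on both sides. The propagator $e^{\delta t L^c}$ is a deterministic linear operator acting on the density it propagates (the advection velocity and the VEGF field being supplied as prescribed coefficients at each step), so expectation commutes with it: $\E[\bar{n}_p(x,t^\ell)\, e^{\delta t L^c}] = \E[\bar{n}_p(x,t^\ell)]\, e^{\delta t L^c}$. Invoking the inductive hypothesis $\E[\bar{n}_p(x,t^\ell)] = \E[n_p(x,t^\ell)]$, this term becomes $\E[n_p(x,t^\ell)]\, e^{\delta t L^c}$.

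The crux of the argument is then to identify this deterministic propagation with the expectation of the coarse density one step later, i.e.\ to establish
\[
\E[n_p(x,t^\ell)]\, e^{\delta t L^c} = \E\bigl[n_p^c(x,t^{\ell+1})\bigr].
\]
Here I would condition on the full particle configuration at time $t^\ell$, which I denote $\mathcal{F}^\ell$. By construction $n_p^c(x,t^{\ell+1})$ is obtained by advancing exactly those particles present at $t^\ell$ under the pure advection-diffusion dynamics, with no births, deaths or VEGF secretion; the only randomness is the Brownian increment over $[t^\ell,t^{\ell+1}]$. The consistency between this advection-diffusion step and the macroscopic equation \eqref{eq:macro} — precisely the statement, recorded before the algorithm, that absent reactions the microscopic, coarse and deterministic processes share the same expectation — gives $\E[n_p^c(x,t^{\ell+1}) \mid \mathcal{F}^\ell] = n_p(x,t^\ell)\, e^{\delta t L^c}$, and the tower property yields the displayed identity. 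Combining with $n_p = n_p^c + R_p$ I obtain
\[
\E[\bar{n}_p(x,t^{\ell+1})] = \E[n_p^c(x,t^{\ell+1})] + \E[R_p(x,t^{\ell+1})] = \E[n_p(x,t^{\ell+1})],
\]
which closes the induction.

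I expect the genuine obstacle to live entirely in this last identification: for a finite number of particles with the density-dependent (hence nonlinear) drift of \eqref{eq:pos_disc}, the conditional mean of the empirical coarse density equals the PDE propagator only in the mean-field sense, so the proof must either lean on the stated equal-expectation assumption or make explicit that $e^{\delta t L^c}$ is taken to reproduce the exact conditional expectation of the advection-diffusion motion. The commuting of expectation with $e^{\delta t L^c}$ in the second step likewise requires the operator to act linearly on the propagated density, which I would flag explicitly rather than let it pass silently.
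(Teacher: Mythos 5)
Your proof takes essentially the same route as the paper's: the paper likewise takes expectations in the update rule \eqref{eq:npbar}, uses linearity of $\E$, and cancels $\E[\tilde{n}_p(x,t^{\ell+1})]$ against $\E[n_p^c(x,t^{\ell+1})]$ via the equal-expectation property of the reaction-free processes stated before the algorithm, under the blanket hypothesis that discretization errors are absent. Your explicit induction over time steps, the conditioning on the configuration at $t^\ell$, and the flagged caveat that the nonlinear density-dependent drift makes $e^{\delta t L^c}$ match the conditional mean only in the mean-field sense are all correct elaborations of what the paper compresses into its one-line argument and its ``assume discretization errors are absent'' assumption.
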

\begin{proof}
Assume that discretization errors are absent. Then, we  can calculate the expectation value of $\bar{n}_p$ based on equation~\eqref{eq:npbar} as follows:
\[
	\E\left[\bar{n}_p(x,t^{l+1})\right] = \E\left[\tilde{n}_p(x,t^{l+1})\right]+\E\left[n_p(x,t^{l+1})\right]-\E\left[n_p^c(x,t^{l+1})\right]
\]
By using the definition of sequentially the definition of $\tilde{n}_p^c$ (see equation~\eqref{eq:nptilde}) and the linearity of $\E$, we can conclude that $\E[\bar{n}_p]=\E[n_p]$ and hence $\bar{n}_p$ is indeed an unbiased estimator.
\end{proof}

\section{Results\label{sec:results}}
In this section, we will illustrate the performance of the variance algorithm described above with
various numerical experiments. The cells are living on a
 $50\times 50$ square grid. By default $2000$ normal cells are uniformly distributed over the whole domain.  A small tumor consisting of $200$ cancer cells are initially normally distributed with mean $0.25\Delta x$ and standard deviation $0.05\Delta x$ in close to the left vessel. We simulate the system over $1920$ timesteps (or 40 days). The whole set of default parameters is summarized in table~\ref{tab:poppar}.
 The normal tissue on the other hand is uniformly distributed over
 the whole domain. To initialize the agent-based simulation, we sampled
 $I_p(0)$ particles from the corresponding distribution. Remark that we mostly use a high number of particles to discretize the population density stochastically in order to make the agent-based model consistent with the continuum description. Hence, the equation to calculate the cell number density can be rewritten as:
\begin{equation}
	n_p(x,t) = \sum_{i=1}^{I_p(t)} q_{i,p} w_{i,p}(t) \delta_{X_{i,p}(t)}
\end{equation}

where an additional weight $q_{i,p}$ is attached to each particle. This implies that each particle has a lower mass. The total mass is $\sum_{i=1}^{I_p(t)}q_{i,p} w_{i,p}(t)$. During the numerical experiments, we will use weights $q_p$ independent of both the specific $i$-th particle of population $p$ and of the time.
\\
 Further, we initialize the environment as follows: two straight vessels at $x=20\Delta x$ and $x=40\Delta x$, corresponding to a
moderate vascular density of $50~\mathrm{cm}^2/\mathrm{cm}^3$
(see~\cite{Owen2011}). The latter results in average oxygen concentrations,
meaning that cells are proceeding through the cell cycle at a speed, which is
slightly higher than half maximal. More details concerning realistic vascular
densities and oxygen concentrations can be found in the supplementary material
provided with~\cite{Owen2011}.\\
The macroscopic equations are simulated using a simple Euler discretization for
the time derivative and a second order central finite volume to discretize the
spatial derivative. In the first three experiments, we have chosen for an
explicit method. Further, the linear systems originating from the reaction-diffusion PDEs
modeling the environment are solved using a conjugate gradient algorithm
(~\cite{eigenweb}). The default choice of discretization, time-step and number
of cells can be found in table~\ref{tab:poppar}.\\
The discussions corresponding to each of the individual experiments are
organized as follows. First we consider the evolution of the population
densities and the environment. Afterwards, we take a closer look at the variance with and without variance reduction. The results of the experiments are obtained by averaging
out over $100$ realizations.
\begin{rem}[Color code]
During the numerical experiments, we adopt the following color code to describe the different quantities:
\begin{itemize}
        \item A colormap from white (low) towards gray (high) is used for the mean population densities (both normal and cancer).
        \item A colormap from white (low) towards blue (high) is used for the variance on the mean cancer cell density (with and without variance reduction).
        \item An additional colormap from green (negative) towards white(zero) and blue(high) is used to denote the covariance between the density and the reaction field.
        \item A colormap from white (low) towards red (high) is used for the mean oxygen distribution.
        \end{itemize}
\end{rem}
\begin{rem}[units]
We use minutes as default time unit and cm as the spatial unit. Those will be omitted in the figure titles for compactness.
\end{rem}
\begin{table}[h!]
\centering
\begin{tabular}{|c|ccc|c|}
\hline
Parameter & Normal & Cancer & EC & units\\ \hline
$D_p$ & $0$ & $\num{5e-9}$ & $\num{1e-8}$ & \cmsq/min\\
$I_p(0)$ & $2000$ & $200$ & $0$ & $\#$particles\\
$q_{i,p}$ & $1$ & $0.5$ & $1$  & dimensionless\\
$\delta t$ & $30$ & $30$ & $30$ & min\\
$n_{p,\max}$ & $1$ & $2$ & $2$ & $\#$particles\\
$\Delta x$ & $\num{4e-3}$ & $\num{4e-3}$ & $\num{4e-3}$ & cm\\
$a$ & $0.5\Delta x$& $0.25\Delta x$& &cm \\
$b$  &$0.5\Delta x$ & $0.05\Delta x$& &cm\\
\hline
\end{tabular}
\caption{Default parameter set used for the numerical experiments\label{tab:poppar}}
\end{table}
 \subsection{A small-scale experiment with a static vasculature}
 \paragraph{Population densities}
 In figure~\ref{fig:EvoEnv}, we have plotted the population density of the normal and cancer tissue, along with the oxygen concentration at time $t=\num{5.76e5}$~min.
 \begin{figure}[h!]
\centering

    \subfigure{
 \includegraphics[width=1\textwidth]{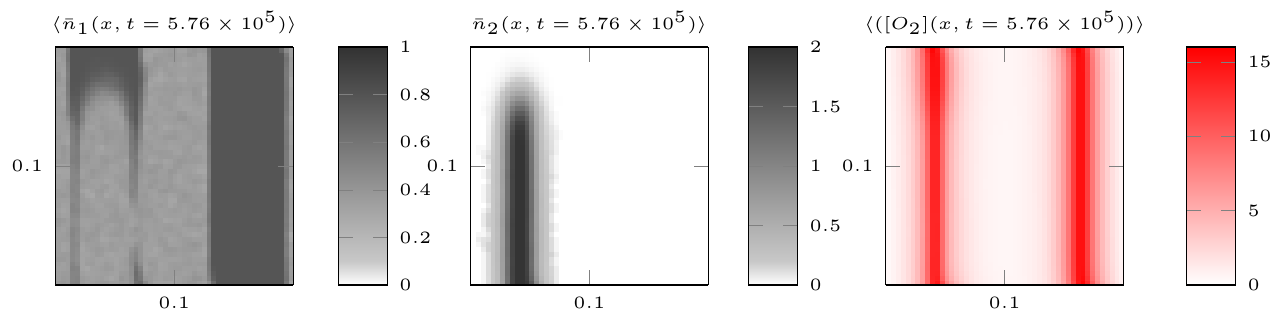}
}
     \caption{Mean cellular densities --normal tissue: (left panel), cancer cell
     density: (middle panel)-- and mean oxygen concentration (right panel) calculated
     using variance reduction at time~$t=\num{5.76e5}$~min.\label{fig:EvoEnv}}
\end{figure}

 The tumor immediately influences the normal
 tissue in the sense that a significant amount of normal cells die in this
 cancerous region. They literally have to make place for the growing tumor. Besides cells also die due to lack off oxygen
 in between the two existing vessels. In the meantime, the tumor starts to grow along the vessel until the tissue is
 locally saturated, meaning that $\sum_{p=1}^P n_p(x,t)>n_{p,\max}$. The high
 birth rate can be explained by the high oxygen concentration. Furthermore, the
 cells also diffuse in the other directions due to the random Brownian motion. This evolution
 can also be seen as an illustration of the ``go or grow''-paradigm, which is
 identified as an important characteristic of the aggressiveness of the
 tumor~\cite{Hatzikirou2012,Garay2013}.\\
 This process continues until the tissue is fully saturated along the
 leftmost vessel. Remark that none of the cells is able to cross the low oxygen zone. They would all die due to the hypoxic environment.

\paragraph{Evolution of the variance.}  We
 investigate how the population densities shown in
 figure~\ref{fig:EvoEnv} influence the variance. Figure~\ref{fig:EvoSmall}
 illustrates the elements contributing to the variance (see equation~\eqref{eq:varnp}). We first consider the variance
 with and without reduction in more detail. Combining the definitions of both variance and $n_p,\bar{n}_p$ yields:
 \begin{eqnarray}
 \Var[n_p(x,t)]&=& \Var[n_p^c(x,t)+R_p(x,t)]\\
	 &=& \Var[n_p^c(x,t)+\Var[R_p(x,t)]+2\mathrm{Cov}(n_p^c(x,t),R_p(x,t))\\
	\Var[\bar{n}_p] &=& \Var[\tilde{n}_p^c]+\Var[R_p(x,t)]+
	2\mathrm{Cov}(\tilde{n}_p^c(x,t),R_p)\label{eq:varnp}
\end{eqnarray}
 the variance on the reaction field (left), the variance on the corresponding
 control densities -- without variance reduction $n_2^c$ and with variance
 reduction $\tilde{n}_2^c$ -- (middle) and the covariance between the reactions
 and the control densities (right). We compare the results without
 (first row) and with variance reduction (second row).
  \begin{figure}[h!]
 \centering
\subfigure{
 	\includegraphics[width=\textwidth]{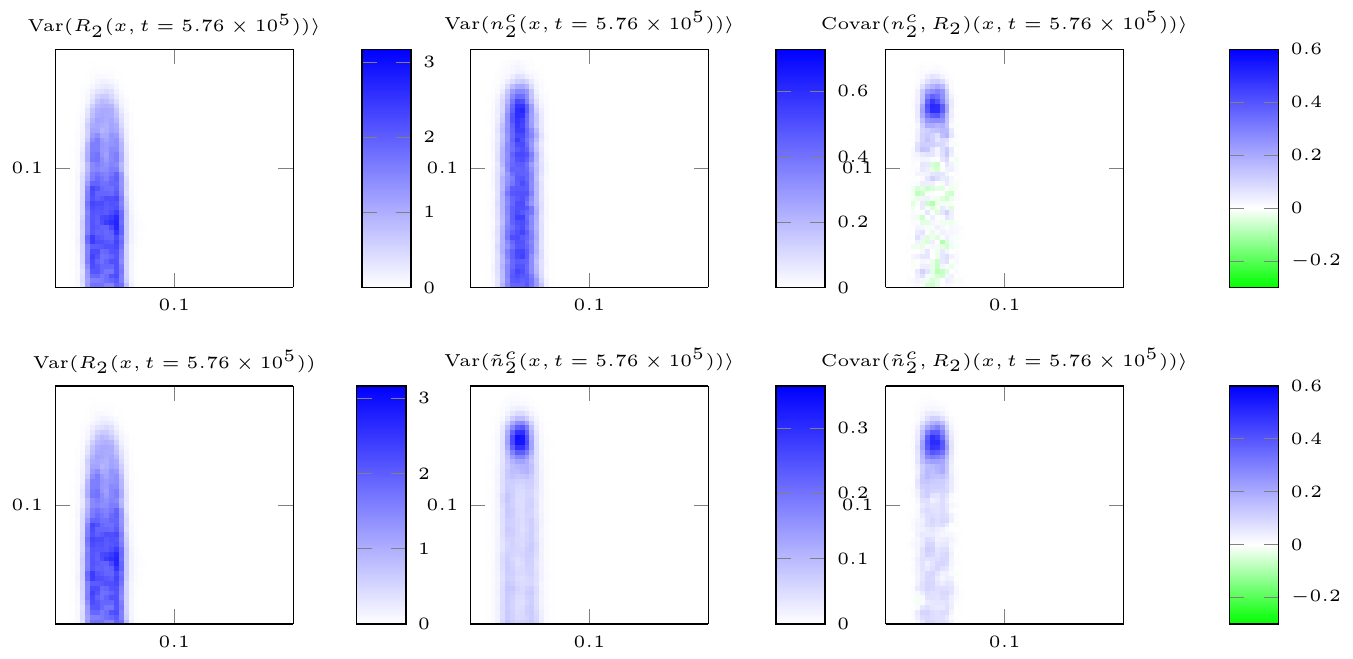}
       }
 \caption{Evolution of the factors contributing -- Var$(R_2(x,t))$
 (left), Var$(n_2^c(x,t))$ (middle) and
 Covar$((R_2,n_2^c)(x,t))$ (right) -- to variance on mean cancer cell density
 with (second row per time) and without variance reduction at time
 $t=\num{5.76e5}$~min.\label{fig:EvoSmall}}
 \end{figure}
 The variance on the reaction field is stretched along the leftmost vessel, as is the tumor itself. To explain the pattern in more detail, we
 have to compare the variance on the reaction field with the corresponding density. The variance
 is especially high just next to the largest concentration of the tumor, where the concentration of reactions  is high due
 to the combined effect of the relatively high number of cancer cells and the fact the tissue is not fully saturated yet.
 Remark that the variance on the
 reaction field does not depend on the variance reduction, since it is fully
 determined by the results of the agent-based simulation.\\
 However, the image is completely different for the variance on the
 corresponding control densities. Observing the middle picture on the first row leads to the conclusion 
 that the noise is dominated by random motion since the variance is
 larger in the middle of the tumor than at the border where most of the
 reactions take place. In contrast, after applying variance reduction the
 variance is fully determined by the reactions, implying that the
 variance is mostly filtered by the algorithm.\\
The above analysis (see figure~\ref{fig:EvoSmall}) of the evolution of the variance
 clearly demonstrates the strong correlation between the oxygen concentration and the variance on the
 population densities. Again a more detailed view of the evolution can be found in the supporting material~\ref{S2_Video}.

To illustrate the performance of the algorithm in another way, we have taken
some slices -- at $y=0.04~\mathrm{cm},y=0.1$~cm and $y=0.14~\mathrm{cm}$
respectively -- of the cancer cell density at $t=\num{5.76e5}$~min. In
figure~\ref{fig:slices} they are plotted along with their $95\%$ confidence
interval. The results based on the full stochastic model are plotted in red,
while the results using the variance reduction algorithm are colored in blue. It
can easily be seen that there is indeed a significant reduction and that the
results using the variance reduction algorithm are consistent with the original
results in the sense that they closely approximate the solution from the full stochastic model and that the variance is reduced significantly.
  \begin{figure}[h!]
     \centering
     
     \includegraphics[width=\textwidth]{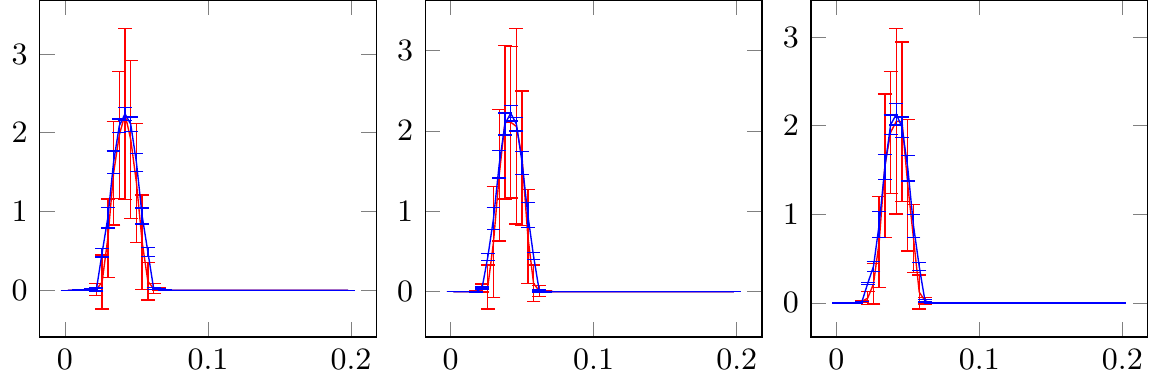}
     \caption{mean cancer cell density and reliability interval at time
     $t=\num{5.76e5}$min at $y=0.04$~cm (left), $y=0.1$~cm (middle) and $y=0.14$~cm
     (right) with (blue) and without variance reduction (red).
     \label{fig:slices}}
 \end{figure}

\subsection{Experiment on a larger domain}
As pointed out before, our lattice-free approach allows to rescale the system in
a straightforward way. Since the cost mainly depends on the number of particles and only
marginal on the the domain size, it is possible to consider to perform a
similar experiment on a rescaled (coarser) grid. To illustrate this we perform
the simulation with $\Delta x=\Delta y=\num{1.26e-2}$~cm, corresponding to a domain
of $0.4\mathrm{cm}^2$, corresponding with an upscaling of a factor $10$. The normal tissue initially consists of $\num{2e4}$
particles and a tumor of $1000$ cells. In
figure~\ref{fig:EvoPopLarge} we have plotted the evolution of the cellular
distributions of the different cell types and the corresponding oxygen
concentration. From the plot in the left column, one can see that normal cells are
multiplying along the rightmost vessel  since the left vessel is fully occupied by the tumor and 
there is not enough space for both the tumor and normal cells. In the rest of the domain the normal tissue is reduced to a minimal level due to lack of oxygen and the influence of the tumor.
 \begin{figure}[h!]
 \centering
	\includegraphics[width=1\textwidth]{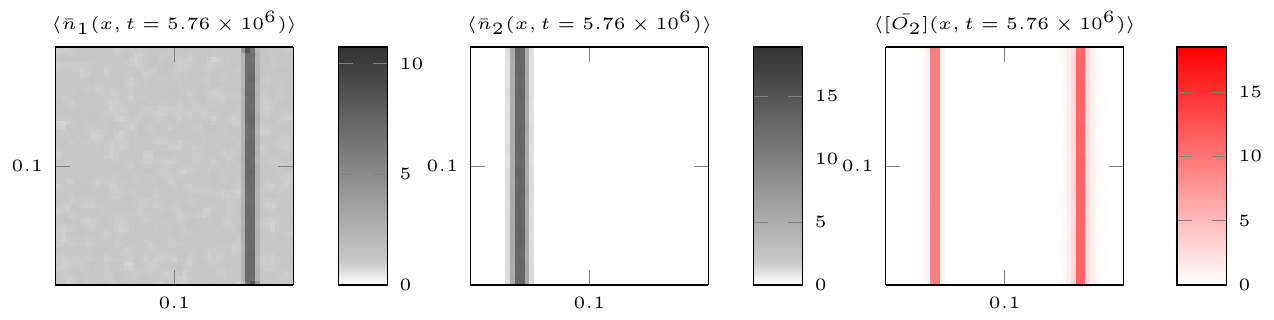}
	\caption{Mean population densities and oxygen distribution in a large scale
	setting at time $t=\num{5.76e6}$~min.\label{fig:EvoPopLarge}}
\end{figure}
In figure~\ref{fig:EvoVarLarge}, we examine the influence of the variance
reduction algorithm on the variance on the resulting tumor cell density as a
function of time. Comparing the variance plot with (right panel) and without (left panel) give rise to the observation that the algorithm again yields a reduction of the variance both in the center and at the border of the tumor. This implies that the border of the tumor can be estimated in a more accurate, which determines the harshness of the tumor.
 \begin{figure}[h!]
\centering
       \includegraphics[width=0.8\textwidth]{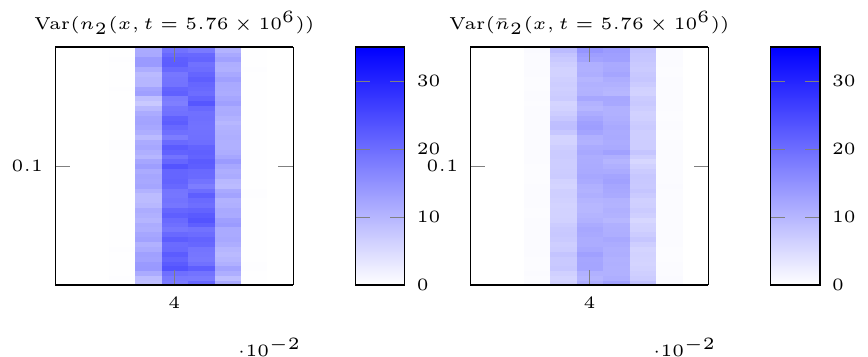}
%
\caption{Evolution of the variance on the mean cancer cell density in a
large-scale setting with and without variance reduction at
time $t=\num{5,76e6}$~min, zoomed in on the left vessel.\label{fig:EvoVarLarge}}
\end{figure}

\subsection{Variance reduction for sprouting angiogenesis}
As a last experiment we will examine the performance of the algorithm in the case
where the vasculature is also updated dynamically according to the model
outlined in the section~\ref{sec:models}. A small tumor mass of initially
$100$ cells - sampled from a uniform distribution, with parameters $a_2=0.3\Delta x,b_2=0.1\Delta x$. The population
density is discretized with $200$ cells, i.e. $q_{i,2}=0.5$. Further, we have chosen $D_2=\num{1e-8}$ cm\textsuperscript{2}/min. The other
parameters are set to the default values outlined in table~\ref{tab:poppar}.
 In contrast to
the previous experiments, the cancer cells are now able to cause extension of the vascular
network according to their needs. The resulting oxygen
distribution reveals a strong correlation with the cancer cell distribution itself, meaning that the
tumor is fully vascularized now and can grow further. A small fraction of the tumor even managed to reach the second vessel supported by some new branches in the vascular network created in response to the high VEGF gradients.
\begin{figure}[h]
\centering
        \includegraphics[width=1\textwidth]{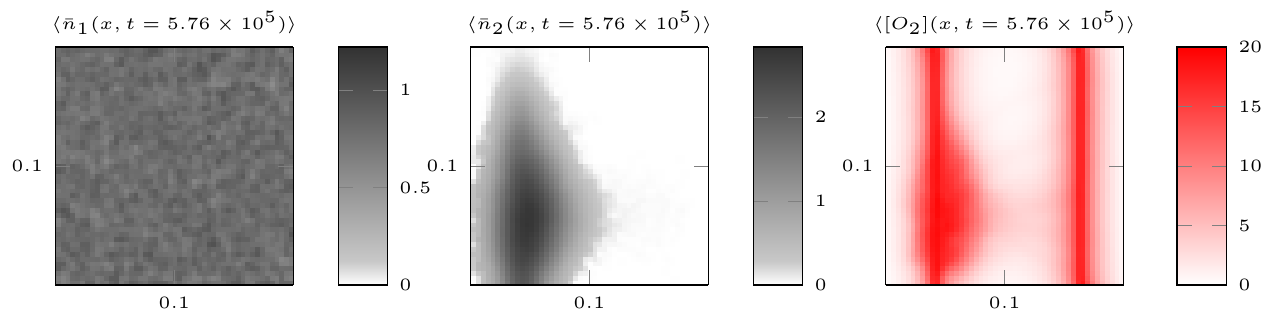}
\caption{Mean cellular densities and oxygen distribution at time $t=\num{5.76e5}$~min with dynamical vasculature\label{fig:evoPopAngio}}
\end{figure}
Next, we investigate how the variance reduction algorithm is performing in this
setting of dynamic vascularization. In figure~\ref{fig:VarAngio}, we have
plotted the variance on the mean cancer cell density with ($\bar{n_2}(x,t)$)
variance reduction on the right and and without variance reduction on the left.

\begin{figure}[h!]
\centering
\subfigure{
        \includegraphics[width=0.8\textwidth]{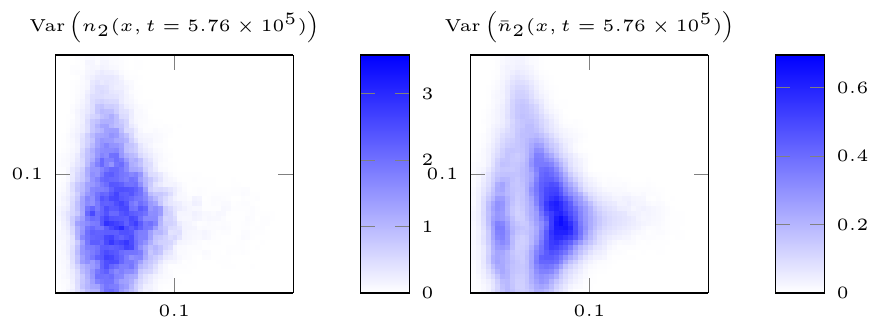}
      }
      \caption{Variance with and without variance
      reduction on the mean cancer cell density with a dynamic vasculature at time $t=\num{5.76e5}$~min\label{fig:VarAngio}}
\end{figure}
A comparison of the two variance plots in figure~\ref{fig:VarAngio} leads to the
conclusion that the variance is reduced everywhere. Along the leftmost vessel,
the algorithm was even able to eliminate the noise completely. Indeed,the tissue
is saturated here, so new cells are born here. Just outside this zone of maximal
saturation the tissue is not so dense giving rise to
more births and a higher level of noise here. In this region, the noise is
proportional to the local density itself.

\subsection{Fast diffusing cancer cells}
 Motivated by the hypothesis
that the diffusion coefficient can be related to the aggressiveness of the
tumor, we investigate the situation where cancer cells have a higher diffusion
coefficient. Swanson et al.
have shown~\cite{Swanson2008} that glioma's with a higher diffusion coefficient have a higher probability to cause metastases, which is
obviously an important characteristic for the long-term survival probability of
the patient. Apart from the modified diffusion coefficient
$D_2=\num{5e-7}$~cm\textsuperscript{2}/min, we adopt the same initial
configuration as in the previous experiment. Remark that the simulations are performed with a smaller time-step ($\delta t=0.3$~min) in order to fulfill
the CFL-condition corresponding to the macroscopic equation.
\paragraph{Populations}
In figure~\ref{fig:HighDiffpop}, the evolution of both the normal tissue and the
tumor are shown along with the local oxygen concentration at $t=\num{1.152e3}$~min.
As in the previous experiment, the normal tissue density is the result of the cell deaths due to the presence of 
the tumor, while the normal cells are more sensitive to hypoxic environment. 
The tumor, on the other hand, has diffused through
 the normal tissue significantly on this short timescale, without consuming too much oxygen.
 Indeed, the cancer cells have already covered a large distance within a
 rather short time interval, meaning that the tumor exhibits the
 \textit{go}-phenotype, rather than the \textit{grow}-phenotype as it was the
 case in the first experiment. Despite the fact that the tumor didn't cause a
 lot of damage, it is potentially dangerous since it can stay more or less invisible for a long time and as soon as the tumor reaches a vessel it is possible that cells invade a vessel and give rise to metastatic spread of the cancer. 
 \begin{figure}[h!]
\centering
        \includegraphics[width=1\textwidth]{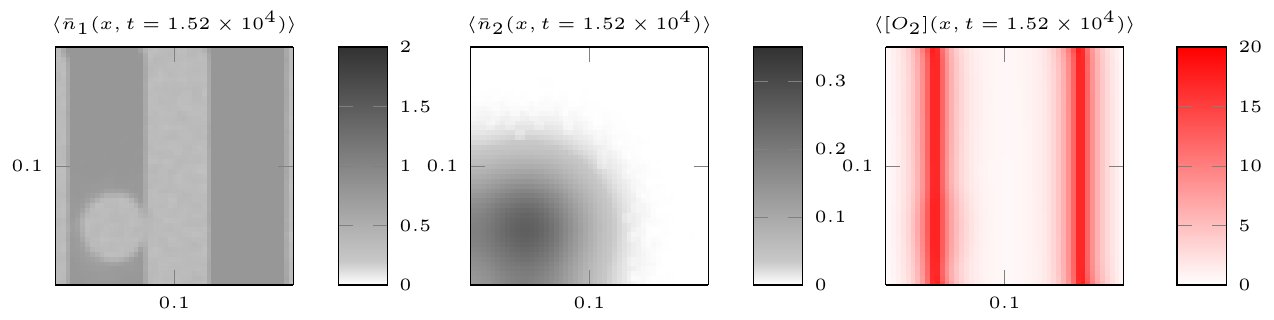}
	\caption{Evolution of the cellular populations as a
	function of time (at time $t=\num{1.52e4}$~min, with fast diffusing cancer cells.\label{fig:HighDiffpop}}
\end{figure}

\paragraph{Evolution of variance}
As before, we also examine the variance on the mean cancer cell density with
and without variance reduction. Without variance
reduction, the resulting variance is proportional to the density itself,
suggesting that the variance is mainly caused by the random jumps. Obviously,
the latter will be higher in zones with more cells. When variance reduction is
applied, the variance is reduced with at least a factor $100$ pointwise and
moreover the plot reveals a clear pattern, which is again related to the oxygen
concentration.
 \begin{figure}[h!]
	\centering
 	\subfigure{
 	        \includegraphics[width=0.8\textwidth]{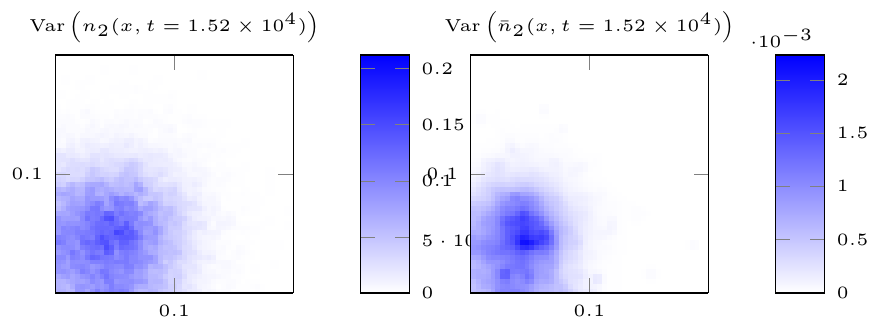}
%
     }

  	  \caption{Evolution of the variance on the mean cancer cell density with
       and without variance reduction at time $t=1152$~min.}
\end{figure}

\section{Discussion\label{sec:disc}}
We developed a novel variance reduction technique
specifically suited to reduce the noise of agent-based models with birth and death events, as it is the case
in our model for tumor growth. We proved that the algorithm outlined
in section~\ref{sec:varred} gave rise to an unbiased estimator and the
variance is determined by the births and deaths. The performance was illustrated
numerically in different possible regimes characterizing different aspects of
tumor growth such as sprouting angiogenesis, highly diffusive cancer cells and
large-scale systems.
The proposed algorithm is based on the idea of control variates, since the
evolution of the system without reactions is known deterministically via the
macroscopic equation~\eqref{eq:macro}.\\A valuable extension would be to combine
this algorithm with other variance reduction techniques such as importance sampling.
It is self-evident that an accurate and efficient simulation of
all the different aspects of the system is crucially important for the
reliability of the system as a whole. Apart from that, we will also extend
our model with important features such as haptotaxis in response to the extra-cellular matrix and include a more
sophisticated model for
stemcellness~\cite{Cicalese2009,Morrison2006,Colaluca2008} since it was
identified as one of the hallmarks of
cancer~\cite{Hanahan2011}. Another track worthwhile further
investigation is to apply our technique to patient-specific data. For instance,
patient-specific data, like MRI-images or blood parameters, could be used as a
specific initial configuration~\cite{Macklin2012}.
Finally, this algorithm can also be applied on related systems such as bone fracture healing and other 
application where we are interested in macroscopic behavior, but with agent-based features characterizing the dynamics.
\section{Supporting Information}
\subsection{S1 Video}
\label{S1_Video}
{\bf Evolution of the population densities in the small scale setting.} Evolution of the mean normal and cancer cell density from time $t=0$ till time $t=\num{5.76e5}~\min$. The initial configuration corresponds with the first numerical experiment.
\subsection{S2 Video}
\label{S2_Video}
{\bf Evolution of the variance in the small scale setting.} Evolution of the variance on the mean normal and cancer cell density from time $t=0$ till time $t=\num{5.76e5}$~min.

\bibliographystyle{abbrv}
\bibliography{references}

\end{document}